\documentclass[lettersize,journal]{IEEEtran}

\usepackage{amsmath,amssymb,amsthm,amsfonts,mathrsfs,dsfont}
\usepackage{graphicx}
\usepackage{subcaption}
\usepackage{float}
\usepackage{verbatim}
\usepackage{epstopdf}
\usepackage{color}
\usepackage{cite}
\usepackage{cancel}
\usepackage[shortlabels]{enumitem}
\usepackage[breaklinks=true,letterpaper=true,colorlinks=false,bookmarks=false]{hyperref}
\usepackage{algorithm}
\usepackage{algpseudocode}
\usepackage{arydshln}
\usepackage[official]{eurosym}
\usepackage{comment}

\usepackage[colorinlistoftodos,prependcaption,textsize=tiny]{todonotes}

\usepackage{tikz}
\usetikzlibrary{chains,shapes.misc,positioning}

\usepackage{pgfplots}
\pgfplotsset{compat=newest}

\newtheorem{theorem}{Theorem}

\newtheorem{remark}{Remark}

\newtheorem{lemma}{Lemma}

\pgfplotscreateplotcyclelist{nano mark style}{%
dashed, every mark/.append style={solid, fill=gray!30}, mark=*, line width=1.1pt\\%
dashed, every mark/.append style={solid, fill=gray!30}, mark=square*, line width=1.1pt\\%
dashed, every mark/.append style={solid, fill=gray!30}, mark=triangle*, mark size=2.5pt, line width=1.1pt\\%
dashed, every mark/.append style={solid, fill=gray!30}, mark=diamond*, mark size=2.5pt, line width=1.1pt\\%
dashed, every mark/.append style={solid, fill=gray!30}, mark=x, mark size=3pt, line width=1.1pt\\%
}

\pgfplotscreateplotcyclelist{nano line style}{%
color=black, line width=2pt\\%
densely dashed, color=black, line width=2pt\\%
densely dotted, color=black, line width=2pt\\%
loosely dashed, color=black, line width=2pt\\%
}

\begin{document}

\title{Differential Privacy over Hamming Codes}


\author{Borzoo Rassouli$^1$ and Morteza Varasteh$^2$\\
\small{$^1$Nokia Bell Labs, Stuttgart, Germany}\\
\small{$^2$University of Essex, Colchester, UK}\\
{\tt\small borzoo.rassouli@nokia-bell-labs.com},
{\tt\small m.varasteh@essex.ac.uk}
}

\maketitle
\begin{abstract} We consider the transmission of the outputs of counting queries over a binary symmetric channel (BSC), where Hamming codes are employed as the channel encoder. Since the channel is inherently noisy, this transmission already provides a degree of privacy protection “for free”, albeit at the cost of reduced utility in the form of decoding errors.
A natural question is whether this privacy can be further improved (i) without any additional real-time obfuscation of the data, such as injecting artificial noise prior to transmission, and (ii) without increasing the end-to-end error probability. In this work, we answer this question in the affirmative by deriving an optimal codeword arrangement that strictly improves differential privacy guarantees while incurring no real-time computational overhead and no degradation in utility.
\end{abstract}
\begin{IEEEkeywords}
Differential privacy, privacy for free, Hamming codes, binary symmetric channel, counting queries.
\end{IEEEkeywords}
\section{introduction}

Differential privacy (DP) has emerged as one of the most widely adopted privacy-enhancing technologies due to its rigorous and quantifiable privacy guarantees\cite{DP2014}. It provides a mathematical framework for limiting the information that can be inferred about any individual contributing to a dataset while enabling useful statistical analysis and machine learning tasks. 

Classical DP mechanisms achieve privacy by deliberately perturbing the released data, typically through the injection of randomized noise. While this approach provides provable privacy guarantees, it introduces an inherent \emph{utility-privacy tradeoff}, whereby stronger privacy protection generally requires larger distortions of the underlying data, which is not only restricted to DP \cite{Zamani}. 

Roughly speaking, noise and randomness are desirable from a DP perspective, as they help obscure sensitive information. In contrast, from a communications perspective, noise is generally undesirable, and considerable effort is devoted to mitigating its effects through sophisticated coding and signal-processing techniques. However, in the context of private communication, it is natural to ask whether the inherent channel noise can be exploited, rather than combated, as a source of privacy enhancement. 

This perspective has recently inspired a growing body of research on so-called \emph{privacy-for-free} mechanisms. Representative examples include dataset condensation techniques, posterior sampling in Bayesian learning, approximate homomorphic encryption schemes, overparameterized learning, communication-efficient federated learning via sketching, and wireless learning systems that exploit channel noise or hardware impairments to provide privacy guarantees without dedicated privacy-preserving operations \cite{dong2022privacy, wang2015privacy, Ogilvie23, Bombari,Smith19, Simeone21, Simeone_p4f22, ADC_p4f23}. The present letter falls within this emerging line of work.
Turning to a different aspect, a fundamental class of statistical releases in differential privacy is that of counting queries, whose output is an integer-valued statistic, such as the number of users possessing a given attribute \cite{counting}. This paper investigates the enhancement of DP, within the privacy-for-free paradigm, for transmitting the outputs of such queries over a noisy channel (BSC) with Hamming codes as the encoder. 

The analysis developed in this letter relies on two key structural properties. First, the codewords must admit an ordering in which consecutive codewords are separated by the minimum code distance. Equivalently, the minimum-distance graph of the code must contain a Hamiltonian path. This, in turn, enables the systematic assignment of neighboring query outputs to codewords that are maximally confusable from a differential privacy perspective. Second, the resulting privacy guarantees depend on the transition probabilities induced by \emph{maximum likelihood} (ML) decoding over the BSC. To obtain an exact characterization of privacy loss, and hence the claim of optimality, these transition probabilities must admit a tractable analytical representation. Hamming codes constitute a particularly attractive setting because both properties hold simultaneously.

In summary, this work makes the following contributions. First, we characterize the minimum achievable privacy loss for the transmission of counting-query outputs over a BSC using Hamming codes. Second, we present an explicit construction, based on a weight-three generator representation and Gray encoding, that attains this optimal privacy loss. Importantly, the proposed scheme preserves the decoding error probability of the underlying communication system and incurs no additional online computational overhead.

\section{Preliminaries}
\subsection{Differential Privacy}
Two datasets $D_1$ and $D_2$ are neighbors (or adjacent) if they differ in exactly one data entry (i.e., the data of a single individual). Let $\epsilon \geq 0$. A (possibly randomized) algorithm or mechanism $\mathcal{A}$ is said to be $\epsilon$-differentially private if
\begin{equation}
    \Pr\{\mathcal{A}(D_1)\in \mathds{S}\}
    \leq e^\epsilon \Pr\{\mathcal{A}(D_2)\in \mathds{S}\},
\end{equation}
for all subsets $\mathds{S}$ of the range of $\mathcal{A}$ and all adjacent datasets $D_1$ and $D_2$.

The smaller $\epsilon$ is, the harder it becomes to distinguish whether $D_1$ or $D_2$ generated the output of the algorithm. This, in turn, makes it less likely to infer the presence or absence of any individual, since $D_1$ and $D_2$ differ in only one entry. Therefore, decreasing $\epsilon$ increases the level of privacy protection provided by the mechanism. As a result, $\epsilon$ is often referred to as the \textit{privacy loss} parameter.


\subsection{Hamming Codes}
Hamming codes (\cite{Hamming}) are a class of binary linear error-correcting codes that can correct single-bit errors. For each integer $q \geq 2$, there exists a Hamming code with block length $n = 2^q - 1$ and message length $k = 2^q - q - 1$ denoted by Hamming$(n,k)$. The parity-check matrix $\mathbf{H}$ of a Hamming code is constructed by listing all nonzero binary vectors of length $q$ as its columns, resulting in a $q \times n$ matrix of rank $q$. The null space of $\mathbf{H}$, which is a linear subspace of dimension $k\ (=n-q)$ in an $n$-dimensional vector space, is taken to be the set of codewords, i.e., a set of $2^k$ binary codewords of length $n$ denoted by $\mathcal{C}$.

Except for the all-zero codeword, the minimum number of nonzero entries in any codeword—called the \textit{minimum weight} of the code—is 3. This follows from the fact that the sum (over $\mathbb{F}_2$) of any two columns of $\mathbf{H}$ is again a column of $\mathbf{H}$. Since the code is linear, the difference between any two codewords is also a codeword. Therefore, the minimum number of positions in which any two codewords differ—called the \textit{minimum distance} of the code—is also 3. As a result, if a codeword $\mathbf{c}$ is corrupted in at most one position, it remains closer (in Hamming distance) to $\mathbf{c}$ than to any other codeword, and the ML decoder can uniquely decode it.

\begin{remark}
For counting queries, the output is an integer-valued statistic, and the outputs corresponding to neighboring datasets differ by at most one. Therefore, adjacency in the dataset domain induces adjacency in the integer-valued output space. When such outputs are transmitted over a communication channel, DP requires an appropriate mapping of outputs to codewords that maximizes their indistinguishability at the receiver, subject to a privacy parameter $\epsilon$.
\end{remark}


\section{Main results}
In this section, we assume that the outputs of a counting query are transmitted over a binary symmetric channel with crossover probability $p\in(0,\tfrac{1}{2})$, denoted BSC$(p)$, using Hamming codes as the channel encoder. Although the channel noise is undesirable from a reliability and utility perspective, it inherently induces a certain level of privacy protection.

In what follows, we propose a method to enhance this level of privacy without introducing additional privatizing steps (such as explicit noise injection prior to transmission) and without increasing the message error probability. In summary, the proposed privacy enhancement incurs no real-time computational overhead and no degradation in utility.

In the context of differential privacy, the objective of a privacy-preserving mapping is to render adjacent datasets sufficiently indistinguishable. For transmission of the outputs of a counting query using Hamming codes, a natural heuristic for improving privacy (i.e., reducing privacy loss) is to arrange the codewords such that adjacent datasets are mapped to codewords with minimum possible Hamming distance.

However, this intuition raises two questions: (i) whether such an arrangement is feasible, and (ii) whether it is optimal in terms of privacy loss.

In this section, we answer these two questions in the affirmative. Moreover, since such an ordering corresponds only to a permutation of message labels, it does not affect the message error probability. 

Let $d_H(\cdot,\cdot)$ denote the Hamming distance between two binary codewords, i.e., the number of positions in which they differ. The following three lemmas are needed in the sequel for the analysis of the privacy loss.

\begin{lemma}\label{lem1}
In the transmission of Hamming$(n,k)$ codewords over a BSC$(p)$, the probability of decoding $\mathbf{c}'$ when $\mathbf{c}$ is transmitted is given by
\begin{equation}\label{transp}
    \Pr\{\mathbf{c}'|\mathbf{c}\}
    = p^d(1-p)^{n-d}\left(1+\frac{1-p}{p}d+\frac{p}{1-p}(n-d)\right),
\end{equation}
for all $\mathbf{c},\mathbf{c}'\in\mathcal{C}$, where $d \triangleq d_H(\mathbf{c},\mathbf{c}')$.
\end{lemma}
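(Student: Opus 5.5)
The approach rests on the fact that Hamming codes are perfect. The Hamming balls of radius one around the $2^k$ codewords each contain $n+1=2^q$ vectors. Since $2^k\cdot 2^q=2^n$, these balls partition $\{0,1\}^n$. Consequently, the ML decoder over a BSC$(p)$ with $p<\tfrac12$ maps a received word $\mathbf{y}$ to the unique codeword $\mathbf{c}'$ with $d_H(\mathbf{y},\mathbf{c}')\le 1$. For a sphere-packing partition no ties arise, and ML coincides with minimum-distance decoding because $p^{d}(1-p)^{n-d}$ is decreasing in $d$.

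With this in hand, $\Pr\{\mathbf{c}'|\mathbf{c}\}$ is the total probability, under BSC$(p)$ with input $\mathbf{c}$, of the decoding region $B(\mathbf{c}')=\{\mathbf{y}: d_H(\mathbf{y},\mathbf{c}')\le 1\}$. Each $\mathbf{y}$ is received with probability $p^{d_H(\mathbf{y},\mathbf{c})}(1-p)^{n-d_H(\mathbf{y},\mathbf{c})}$. So we only need the distribution of $d_H(\mathbf{y},\mathbf{c})$ as $\mathbf{y}$ ranges over $B(\mathbf{c}')$.

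Let $d=d_H(\mathbf{c},\mathbf{c}')$. The ball $B(\mathbf{c}')$ contains $\mathbf{c}'$ itself, which is at distance $d$ from $\mathbf{c}$. It also contains the $n$ words obtained by flipping one coordinate of $\mathbf{c}'$. Flipping one of the $d$ coordinates where $\mathbf{c}$ and $\mathbf{c}'$ differ gives distance $d-1$. Flipping one of the remaining $n-d$ coordinates gives distance $d+1$. Summing the three contributions gives
\begin{equation*}
p^d(1-p)^{n-d}+d\,p^{d-1}(1-p)^{n-d+1}+(n-d)\,p^{d+1}(1-p)^{n-d-1}.
\end{equation*}
Factoring out $p^d(1-p)^{n-d}$ yields exactly \eqref{transp}. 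When $d=0$ the middle term vanishes because of the factor $d$, so the formula also covers correct decoding, giving $(1-p)^n+np(1-p)^{n-1}$.

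The only genuinely non-routine step is justifying that the ML decision regions are exactly the radius-one balls. This uses perfectness of the code, which follows from the counting identity $2^k(n+1)=2^n$ together with the minimum distance $3$ stated earlier. The minimum distance guarantees the balls are disjoint, and the count shows they cover the whole space. The rest is bookkeeping.
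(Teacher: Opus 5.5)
Your proof is correct and follows essentially the same route as the paper: identify the decoding region of $\mathbf{c}'$ as the radius-one ball around it, then sum the probabilities of its points at distances $d$, $d-1$ (there are $d$ of these) and $d+1$ (there are $n-d$ of these) from $\mathbf{c}$. Your explicit appeal to perfectness via $2^k(n+1)=2^n$ is a useful addition. The paper justifies the ``only if'' direction of that region characterization only by citing single-error correction, which on its own gives just the ``if'' direction.
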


\begin{proof}
The proof is provided in Appendix \ref{AppA}.
\end{proof}

For fixed $n,p$, let $f(d)$ denote the transition probability in \eqref{transp} as a function of $d$.

\begin{lemma}\label{lem2}
For nonzero integer $m$, the ratio $\frac{f(d+m)}{f(d)}$ is strictly decreasing (increasing) in $d$ if $m$ is positive (negative). Furthermore, $f(d)$ is strictly decreasing in $d$.
\end{lemma}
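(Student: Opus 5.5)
Write $r=\frac{p}{1-p}\in(0,1)$, since $p<\tfrac12$. Factoring $(1-p)^n$ out of \eqref{transp} gives
\[
f(d)=(1-p)^n\, r^d\, g(d),\qquad g(d)\triangleq 1+\frac{d}{r}+r(n-d)=(1+rn)+d\left(\frac{1}{r}-r\right).
\]
So $g$ is affine in $d$, with strictly positive slope $\beta\triangleq \frac1r-r>0$ and positive intercept $\alpha\triangleq 1+rn>0$. Hence $g(d)>0$ for all $d\ge 0$, and moreover for all real $d>-\alpha/\beta$. The ratio becomes
\[
\frac{f(d+m)}{f(d)}=r^m\,\frac{g(d+m)}{g(d)}=r^m\left(1+\frac{m\beta}{\alpha+\beta d}\right).
\]

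For the monotonicity in $d$, treat $d$ as a real variable on the range where both $g(d)$ and $g(d+m)$ are positive. This covers all admissible integer distances $0\le d,\,d+m\le n$. The factor $r^m>0$ does not depend on $d$. The term $\frac{m\beta}{\alpha+\beta d}$ has a positive denominator that is strictly increasing in $d$. Therefore it is strictly decreasing in $d$ when $m>0$ and strictly increasing when $m<0$. This gives the first claim. Equivalently, one can differentiate: $\frac{d}{dd}\frac{g(d+m)}{g(d)}=\frac{\beta g(d)-\beta g(d+m)}{g(d)^2}=-\frac{m\beta^2}{g(d)^2}$, whose sign is that of $-m$.

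For the second claim, it suffices to show $f(d+1)<f(d)$, i.e. to take $m=1$:
\[
\frac{f(d+1)}{f(d)}=r\,\frac{\alpha+\beta(d+1)}{\alpha+\beta d}<1
\iff r\alpha+r\beta(d+1)<\alpha+\beta d
\iff r\beta<(1-r)(\alpha+\beta d).
\]
We have $r\beta=1-r^2=(1-r)(1+r)$ and $1-r>0$. So the condition reduces to $1+r<\alpha+\beta d=1+rn+d\left(\frac1r-r\right)$, i.e. $r<rn+d\beta$. This holds because $n\ge 3>1$ and $d\ge0$.

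The only delicate point is this last inequality. Decreasing monotonicity of $f$ is not automatic, since the affine factor $g$ grows with $d$. The concrete check above settles it, using $n>1$. Real-variable monotonicity also covers all admissible integers, including nonuniform steps if needed.
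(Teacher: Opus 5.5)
Your proof is correct and follows essentially the same route as the paper: your constants $\beta=\frac{1}{r}-r$ and $\alpha=1+rn$ are exactly the paper's $a=\frac{1-2p}{p(1-p)}$ and $b=1+\frac{np}{1-p}$, and you obtain the first claim by the same sign argument on $r^m\bigl(1+\frac{m\beta}{\alpha+\beta d}\bigr)$. The only minor difference is in the second claim: the paper uses the first claim to reduce to the single check $f(1)/f(0)<1$, whereas you verify $f(d+1)<f(d)$ directly for every $d\ge 0$, and both arguments rest on the same inequality $n>1$.
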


\begin{proof}
The proof is provided in Appendix \ref{AppB}.
\end{proof}
\begin{lemma}\label{lem3}
    In Hamming$(n,k)$ code, there exists a set of $k$ codewords of weight $3$ that form a basis for the null space of the parity-check matrix $\mathbf{H}$. In other words, every codeword can be expressed as a linear combination of these $k$ basis codewords. 
\end{lemma}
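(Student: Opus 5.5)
We will build the basis directly from the structure of $\mathbf{H}$. First we reorder the columns of $\mathbf{H}$ (a coordinate permutation, which preserves weights and the code structure) so that the $q$ unit vectors $\mathbf{e}_1,\dots,\mathbf{e}_q$ occupy the last $q$ positions. The first $k=n-q$ columns are then exactly the nonzero vectors $\mathbf{h}\in\mathbb{F}_2^q$ of Hamming weight at least $2$. In this form $\mathbf{H}=[\mathbf{A}\,|\,\mathbf{I}_q]$, and the null space has the systematic basis $\{\mathbf{u}_j\}_{j=1}^k$. Here $\mathbf{u}_j$ has a $1$ in position $j$ and, among the last $q$ coordinates, has the support of the column $\mathbf{h}_j$. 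The weight of $\mathbf{u}_j$ is therefore $1+\mathrm{wt}(\mathbf{h}_j)$. This basis only has weight $3$ at the columns of weight $2$, so the plan is to replace it by a different set of weight-$3$ codewords and show that set is still a basis.

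For each weight-$\geq 2$ column $\mathbf{h}$, we choose a weight-$3$ codeword whose support is $\{\mathbf{h}, \mathbf{a}, \mathbf{b}\}$ with $\mathbf{a}+\mathbf{b}=\mathbf{h}$. This uses the fact, already noted in the paper, that any two columns sum to a third column, so every such triple lies in the null space. The pair must be chosen so that the resulting $k$ vectors form a triangular system. We order the non-unit columns by weight, and within each weight arbitrarily. For $\mathbf{h}$ of weight $w\geq 2$, we write $\mathbf{h}=\mathbf{a}+\mathbf{e}_i$, where $\mathbf{e}_i$ is a unit vector in the support of $\mathbf{h}$ and $\mathbf{a}=\mathbf{h}+\mathbf{e}_i$ has weight $w-1$. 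The codeword $\mathbf{v}_{\mathbf{h}}$ is then supported on $\{\mathbf{h},\mathbf{a},\mathbf{e}_i\}$. When $w=2$, the vector $\mathbf{a}$ is itself a unit vector, and $\mathbf{v}_{\mathbf{h}}=\mathbf{u}_{\mathbf{h}}$. When $w\ge3$, $\mathbf{a}$ is a non-unit column of strictly smaller weight.

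Next we show linear independence. We restrict each $\mathbf{v}_{\mathbf{h}}$ to the first $k$ coordinates, which are indexed by the non-unit columns. The restriction of $\mathbf{v}_{\mathbf{h}}$ has a $1$ at $\mathbf{h}$ and possibly a $1$ at $\mathbf{a}$, which comes strictly earlier in the weight order. Listing the vectors in that order, the $k\times k$ matrix of restrictions is unit lower triangular over $\mathbb{F}_2$, hence invertible. So the $k$ codewords $\mathbf{v}_{\mathbf{h}}$ are linearly independent. Since the null space has dimension $k$, they form a basis, and each has weight exactly $3$ by construction.

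The main obstacle is the step ensuring that the chosen weight-$3$ vectors do not collapse in rank. A naive choice, such as using an arbitrary pair summing to $\mathbf{h}$, could produce dependencies. The weight-decreasing choice $\mathbf{a}=\mathbf{h}+\mathbf{e}_i$ is what makes the triangular structure, and hence the rank count, immediate. Undoing the initial column permutation gives the statement for the original $\mathbf{H}$.
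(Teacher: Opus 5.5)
Your proof is correct, but it is not the argument the paper gives for this lemma.

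In Appendix~C the paper argues existentially. It shows that the set $\mathcal{W}_3$ of \emph{all} weight-$3$ codewords spans the code, using a greedy weight-reduction step. Take a codeword of weight $m>3$, pick two of its ones, and add the weight-$3$ codeword supported on those two positions and the position of their column sum. The weight drops to $m-1$ or $m-3$, and the step is iterated, using that no codewords of weight $1$ or $2$ exist. A basis of size $k$ is then extracted from this spanning set by a dimension count.

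You instead exhibit a specific family of $k$ weight-$3$ codewords and prove independence directly, through the unit lower-triangular restriction to the $k$ non-unit coordinates. The weight-decreasing choice $\mathbf{a}=\mathbf{h}+\mathbf{e}_i$ is what forces the triangular shape. This is essentially the constructive route the paper takes later, in Step~3 of the theorem, where the columns of $\mathbf{A}$ are sorted by decreasing decimal value and the most significant bit is peeled off. Your version uses weight order in place of decimal order, and it actually verifies the triangularity that the paper only asserts there.

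The paper's appendix argument buys an ordering-free statement and an explicit decomposition of any codeword into weight-$3$ codewords. Yours delivers the basis itself, i.e.\ directly the weight-$3$ generator matrix $\mathbf{G}$ needed for the Gray-code construction, so it effectively merges the lemma with Step~3.

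The initial column permutation is harmless but unnecessary, since you already index coordinates by the column vectors themselves.
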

\begin{proof}
The proof is provided in Appendix \ref{AppC}.

\end{proof}

We are now ready to establish the main result of this section, stated in the following theorem.
\begin{theorem}
    The minimum privacy loss in transmission of the outputs of a counting query over a BSC($p$) via Hamming$(n,k)$ codes is 
    \begin{equation}\label{opteps}
        \epsilon^*=3\ln\frac{1-p}{p}-\ln\frac{(1-p)n+p}{(1-p)n+p-3\frac{1-2p}{1-p}},
    \end{equation}
    and it is achieved if and only if the codewords are arranged such that $d_H(\mathbf{c}_i,\mathbf{c}_{i+1})=3,\ \forall i\in[0:2^k-2]$. 
\end{theorem}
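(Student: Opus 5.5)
We fix an arrangement, i.e., a bijection $i\mapsto \mathbf{c}_i$ from the query outputs $[0:2^k-1]$ to $\mathcal{C}$, and compute the privacy loss of the end-to-end mechanism $i\mapsto \hat{\mathbf{c}}$ obtained by ML decoding over BSC$(p)$. The output space is finite, so the worst-case set $\mathds{S}$ can be taken to be a singleton, and
\begin{equation*}
\epsilon=\max_{i}\max_{\mathbf{c}'\in\mathcal{C}}\max\left\{\ln\frac{f(d_H(\mathbf{c}_i,\mathbf{c}'))}{f(d_H(\mathbf{c}_{i+1},\mathbf{c}'))},\ln\frac{f(d_H(\mathbf{c}_{i+1},\mathbf{c}'))}{f(d_H(\mathbf{c}_i,\mathbf{c}'))}\right\},
\end{equation*}
where $f$ is the transition probability of Lemma~\ref{lem1}. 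Write $D_i=d_H(\mathbf{c}_i,\mathbf{c}_{i+1})\ge 3$. By the triangle inequality, the two distances $d=d_H(\mathbf{c}_i,\mathbf{c}')$ and $d+m=d_H(\mathbf{c}_{i+1},\mathbf{c}')$ satisfy $|m|\le D_i$. Lemma~\ref{lem2} says that $f$ is strictly decreasing, so a larger $|m|$ gives a larger ratio. It also says that $f(d)/f(d+m)$ with $m>0$ is increasing in $d$, so the ratio is largest when $d$ is as small as possible. Choosing $\mathbf{c}'=\mathbf{c}_i$ gives $d=0$ and $m=D_i$, and this attains the maximum for the pair $(i,i+1)$. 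The pair therefore contributes exactly $\ln\frac{f(0)}{f(D_i)}$.

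The next step is to check that this value matches \eqref{opteps} when $D_i=3$. We have $f(0)=(1-p)^n(1+\frac{pn}{1-p})$ and $f(3)=p^3(1-p)^{n-3}(3\frac{1-p}{p}+\frac{p(n-3)}{1-p})$. Multiplying through by $(1-p)$ and simplifying, $\ln\frac{f(0)}{f(3)}=3\ln\frac{1-p}{p}-\ln\frac{(1-p)n+p}{(1-p)n+p-3\frac{1-2p}{1-p}}$, which is \eqref{opteps}. This is routine algebra.

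This leads to the lower bound and to the ``only if'' direction. Any arrangement has $\epsilon=\max_i\ln\frac{f(0)}{f(D_i)}$. Because $f$ is strictly decreasing and $D_i\ge 3$, we get $\epsilon\ge\ln\frac{f(0)}{f(3)}=\epsilon^*$. Equality holds precisely when every $D_i=3$, since a single $D_i\ge4$ makes $\epsilon$ strictly larger.

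Achievability is the main obstacle. We need a Hamiltonian path in the graph on $\mathcal{C}$ whose edges join codewords at distance $3$. Lemma~\ref{lem3} supplies a basis $\mathbf{g}_1,\dots,\mathbf{g}_k$ of weight-3 codewords. We then take a binary reflected Gray code $\mathbf{b}_0,\dots,\mathbf{b}_{2^k-1}$ on $\{0,1\}^k$ and set $\mathbf{c}_i=\sum_j b_{i,j}\mathbf{g}_j$. Linear independence of the basis makes this map a bijection onto $\mathcal{C}$. Consecutive Gray words differ in exactly one coordinate $j$, so $\mathbf{c}_{i+1}-\mathbf{c}_i=\mathbf{g}_j$ and $D_i=3$. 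Finally, the arrangement is only a relabeling of the messages, so it leaves the error probability unchanged. All the difficulty is in Lemma~\ref{lem3}, which we may assume.
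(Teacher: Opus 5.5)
There is a genuine gap: you pick the wrong maximizing output codeword for a neighboring pair. You correctly note from Lemma~\ref{lem2} that for $m>0$ the ratio $f(d)/f(d+m)$ is \emph{increasing} in $d$. You then conclude that it is largest when $d$ is as small as possible, but the monotonicity says the opposite. For a pair at distance $D_i$, the ratio is maximized by making both distances as large as possible, i.e.\ $d+m=n$ and $d=n-D_i$. This is attained by $\mathbf{c}'=\mathbf{c}_{i+1}\oplus\mathbf{1}$, which is a codeword because the all-ones vector lies in every Hamming code (the fact the paper uses in Step~1). The pair's contribution is therefore $\ln\frac{f(n-D_i)}{f(n)}$, not $\ln\frac{f(0)}{f(D_i)}$; the latter is strictly smaller whenever $n-D_i>0$.

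As a result your ``routine algebra'' step is false: $\ln\frac{f(0)}{f(3)}$ does not equal the stated $\epsilon^*$. Your expression for $f(3)$ also drops the leading $1$. For example, with $n=7$ and $p=0.1$ one gets $f(0)/f(3)=729/16$, whereas $e^{\epsilon^*}=729\cdot\frac{7}{12}=f(4)/f(7)$. The term $(1-p)n+p=p\,(1+\frac{1-p}{p}n)$ in $\epsilon^*$ comes from $f(n)$, while $f(0)$ produces $(1-p)+pn$. As written, your argument gives only a non-tight lower bound and assigns the wrong privacy loss to distance-$3$ arrangements, so neither the optimal value nor the ``only if'' claim is established.

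The repair is exactly the paper's Steps~1--2. Take any $\mathbf{c}'$ with distances $d_1\le d_2$ to $\mathbf{c}_i,\mathbf{c}_{i+1}$. The triangle inequality gives $d_1\ge d_2-D_i$, so $\frac{f(d_1)}{f(d_2)}\le\frac{f(d_2-D_i)}{f(d_2)}\le\frac{f(n-D_i)}{f(n)}$. The first inequality uses that $f$ is decreasing, and the second uses that $f(d-D_i)/f(d)$ is increasing in $d$ (Lemma~\ref{lem2}) together with $d_2\le n$. Equality holds at the complement codeword. Hence $\epsilon=\max_i\ln\frac{f(n-D_i)}{f(n)}\ge\ln\frac{f(n-3)}{f(n)}=\epsilon^*$, with equality iff every $D_i=3$, since $f(n-D)$ is strictly increasing in $D$. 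With this correction, your per-pair formulation is a clean repackaging of the paper's converse. Your achievability argument (weight-$3$ basis from Lemma~\ref{lem3} combined with a reflected Gray code) is correct and matches the paper's construction.
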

\begin{proof}
The proof is completed by establishing the following steps:
\begin{enumerate}

\item For any arrangement of codewords in which there exists at least one pair of adjacent codewords whose Hamming distance exceeds $3$, the privacy loss is strictly greater than $\epsilon^*$.

\item For any arrangement of codewords in which every pair of adjacent codewords has Hamming distance equal to $3$, the privacy loss is $\epsilon^*$.

\item An arrangement of codewords as in the previous step can be constructed.

\end{enumerate}

Let $\mathcal{M} \triangleq [0:2^k-1]$ denote the message set, and let $p_{j,i} \triangleq \Pr\{\mathbf{c}_j \mid \mathbf{c}_i\},\ \forall i,j \in \mathcal{M}$,
denote the probability that the decoder outputs $\mathbf{c}_j$ when $\mathbf{c}_i$ is transmitted.

From the definition of differential privacy, it follows that the privacy loss $\epsilon$ in the transmission is given by
\begin{equation}\label{privacyloss}
\epsilon
= \ln \max_{\substack{i,j,k \in \mathcal{M} \ , \ |i-k|=1}}
\frac{p_{j,i}}{p_{j,k}},
\end{equation}
where $|i-k|=1$ corresponds to neighboring query outputs. In other words, the privacy loss is the natural logarithm of the maximum ratio between the probabilities of producing the same decoded output when two neighboring codewords are transmitted.

For simplicity, define
$d_{j,i} \triangleq d_H(\mathbf{c}_j,\mathbf{c}_i).
$ Hence, we can write $p_{j,i} = f(d_{j,i}), \forall i,j \in \mathcal{M}$,
where $f(\cdot)$ is the error probability defined prior to Lemma~\ref{lem2}. 

Step 1 is proved as follows. Consider an arbitrary arrangement in which there exist two adjacent codewords $\mathbf{c}_i$ and $\mathbf{c}_{i+1}$ such that $d_{i,i+1}\triangleq m$ with $m>3$. Let $\mathbf{c}_j$ denote the modulo-two sum of $\mathbf{c}_i$ and the all-one codeword, and from the linearity of the code, $\mathbf{c}_j$ is also a codeword. Obviously, $d_{j,i}=n$ and $d_{j,i+1}=n-m$, and from the definition of $\epsilon$, we have
 \begin{align}
     \epsilon&\geq\ln\frac{p_{j,i+1}}{p_{j,i}}\nonumber\\
     &=\ln\frac{f(d_{j,i+1})}{f(d_{j,i})}\nonumber\\
     &=\ln\frac{f(n-m)}{f(n)}\nonumber\\
     &>\ln\frac{f(n-3)}{f(n)}\label{mono}\\
     &=\epsilon^*,\label{mono1}
 \end{align}
where in (\ref{mono}), we used the fact that $f(d)$ is strictly decreasing in $d$, as established in Lemma~\ref{lem2}, and $m>3$. The last equality in (\ref{mono1}) follows by evaluating (\ref{mono}). Therefore, the resulting privacy loss is strictly greater than $\epsilon^*$, implying that such an arrangement is not optimal with respect to minimizing the privacy loss. 

For step 2, we proceed as follows. Assume that we have an arrangement in which all the consecutive codewords have the minimum distance of three. Therefore, the privacy loss can be written as
\begin{align}
\epsilon
&= \ln \max_{\substack{i,j,k \in \mathcal{M} \ |i-k|=1}} \frac{p_{j,i}}{p_{j,k}} \nonumber\\
&= \ln \max_{j \in \mathcal{M}} \max_{\substack{i,k \in \mathcal{M} \ |i-k|=1}} \frac{p_{j,i}}{p_{j,k}} \nonumber\\
&= \ln \max_{j \in \mathcal{M}} \max_{\substack{i,k \in \mathcal{M} \ |i-k|=1}} \frac{f(d_{j,i})}{f(d_{j,k})}\nonumber\\
&= \ln \max_{j \in \mathcal{M}} \max_{\substack{i \in \mathcal{M} \setminus {2^k-1}}}
\max \left\{
\frac{f(d_{j,i})}{f(d_{j,i+1})},
\frac{f(d_{j,i+1})}{f(d_{j,i})}
\right\} \label{num4}\\
&= \ln \max_{j \in \mathcal{M}} \max_{\substack{i \in \mathcal{M} \setminus {2^k-1}}}
\frac{f\big(\min\{d_{j,i}, d_{j,i+1}\}\big)}
{f\big(\max\{d_{j,i}, d_{j,i+1}\}\big)}. \label{eq4}
\end{align}
\begin{align}
&\leq \ln \max_{j \in \mathcal{M}} \max_{\substack{i \in \mathcal{M} \setminus {2^k-1}}}
\frac{f\big(\max\{d_{j,i}, d_{j,i+1}\}-3\big)}
{f\big(\max\{d_{j,i}, d_{j,i+1}\}\big)} \label{eq5}\\
&\leq \ln \max_{j \in \mathcal{M}} \max_{\substack{i \in \mathcal{M} \setminus {2^k-1}}}
\frac{f(n-3)}{f(n)} \label{eq6}\\
&= \ln \frac{f(n-3)}{f(n)}\nonumber \\
&= \epsilon^*,\nonumber
\end{align}
where (\ref{num4}) accounts for the fact that $k$ is either $i-1$ or $i+1$, and (\ref{eq4}) follows from the fact that $f(d)$ is strictly decreasing in $d$ by Lemma~\ref{lem2}. Moreover, (\ref{eq5}) follows from the monotonicity of $f(\cdot)$ together with the triangle inequality
\begin{align*}
    d_{j,i}
    &\leq d_{j,k} + d_{k,i}
    = d_{j,k} + 3, \quad \forall\, i,j,k \in \mathcal{M}: |i-k|=1.
\end{align*}
Therefore, we obtain $\min\{d_{j,i}, d_{j,i+1}\}
\geq \max\{d_{j,i}, d_{j,i+1}\} - 3.$

Lemma~\ref{lem2} results in (\ref{eq6}), since the ratio $\frac{f(d-3)}{f(d)}$ is strictly increasing in $d$ and therefore attains its maximum at $d=n$. 

 Thus far, we have shown that for any arrangement in which the distance of any two adjacent codewords is 3, we have $\epsilon\leq\epsilon^*$. To show that $\epsilon=\epsilon^*$, it suffices to show $\epsilon\geq\epsilon^*$. However, this can be readily derived from the proof of step 1. More specifically, put $m=3$ and eliminate the strict inequality in (\ref{mono}), and we get
 \begin{align}
     \epsilon&\geq\ln\frac{p_{j,i+1}}{p_{j,i}}\nonumber\\
     &=\ln\frac{f(n-3)}{f(n)}\nonumber\\
     &=\epsilon^*.\nonumber
 \end{align}
This completes the proof of the converse.

For Step 3, we may proceed in two ways. The first and shorter approach is to invoke Lemma~\ref{lem3}. Specifically, Lemma~\ref{lem3} implies that the minimum-distance graph of the code is a connected finite Cayley graph and therefore admits a Hamiltonian path. The existence of such a path is equivalent to the existence of an arrangement of the codewords in which every pair of consecutive codewords differs in exactly three positions. This completes the proof.

Rather than relying on this indirect graph-theoretic argument, however, we adopt a constructive approach and explicitly build such an arrangement. To this end, it suffices to identify the basis of $k$ weight-$3$ codewords described in Lemma~\ref{lem3}.

Arrange the columns of the parity-check matrix $\mathbf{H}$ in decreasing order of their decimal representations except for the identity matrix $\mathbf{I}$ that comes at the end. For example, for the Hamming code with $(n,k)=(15,11)$, we obtain
\begin{equation*}
\mathbf{H}=
\begin{bmatrix}
1 & 1 & 1 & 1 & 1 & 1 & 1 & 0 & 0 & 0 & 0 & 1 & 0 & 0 & 0 \\
1 & 1 & 1 & 1 & 0 & 0 & 0 & 1 & 1 & 1 & 0 & 0 & 1 & 0 & 0 \\
1 & 1 & 0 & 0 & 1 & 1 & 0 & 1 & 1 & 0 & 1 & 0 & 0 & 1 & 0 \\
1 & 0 & 1 & 0 & 1 & 0 & 1 & 1 & 0 & 1 & 1 & 0 & 0 & 0 & 1
\end{bmatrix}.
\end{equation*}
We construct \(k\) linearly independent codewords of weight \(3\) as follows. For each \(i\in[1:k]\), let \(\mathbf{c}_i\) denote a codeword having a \(1\) in position \(i\) and \(0\)s in positions \([1:i-1]\). This triangular structure guarantees linear independence, since \(\mathbf{c}_i\) cannot be expressed as a linear combination of any subset of the remaining \(k-1\) codewords.

It remains to determine two additional positions in which to place \(1\)s so that \(\mathbf{c}_i\) has weight \(3\) and satisfies $\mathbf{H}\mathbf{c}_i^T=\mathbf{0}$.
Equivalently, the sum of the two columns of \(\mathbf{H}\) corresponding to these additional positions must equal column \(i\). There are several ways to select such positions; one of them is described below.

Let \(\pi:[1:n]\to[1:n]\) be the one-to-one mapping that assigns to each column index of \(\mathbf{H}\) the decimal value of the corresponding binary column vector. In the example under consideration, the eleventh column is \([0\ 0\ 1\ 1]^T\), whose decimal value is \(3\), and therefore \(\pi(11)=3\). The remaining values of \(\pi\) are defined analogously. 

For each \(i\in[1:k]\), let \(R(i)\) denote the position of the most significant \(1\) in column \(i\). More precisely,   $R(i)\triangleq \max_{2^{j-1}\leq \pi(i)}j.$
For example, we have $R(11)=2$, and so on. 

To determine the remaining two positions in codeword $i$ that are to be assigned the value $1$, we decompose column \(i\) into the sum of its most significant bit and the vector formed by its remaining bits.  For example, column 3, namely $[1\ 1\ 0\ 1]^T$ can be written as $[1\ 0\ 0\ 0]^T+[0\ 1\ 0\ 1]^T$, which correspond to columns 12 and 10. Hence, codeword 3 has 1's in positions 3, 12 and 10, and zeros elsewhere. This construction can be described algorithmically as follows. The index corresponding to the most significant bit is $n+1-R(i).$ In the above example, $16-R(3)=12$.
The index of the second additional column is obtained from its decimal value. Specifically, the decimal value of column \(i\) is \(\pi(i)\), while the decimal value of its most significant bit is \(2^{R(i)-1}\). Therefore, the decimal value of the remaining part is $\pi(i)-2^{R(i)-1}$,
and the corresponding column index is $\pi^{-1}\left(\pi(i)-2^{R(i)-1}\right)$. For the example above, \(\pi(3)=13\) and \(2^{R(3)-1}=8\), which yields $\pi^{-1}(13-8)=\pi^{-1}(5)=10$, as claimed.

In general, for each \(i\in[1:k]\), define codeword \(\mathbf{c}_i\) by placing \(1\)s in positions
$i, n-R(i)+1, \text{ and } \pi^{-1}\left(\pi(i)-2^{R(i)-1}\right),
$
and \(0\)'s in all remaining positions.
By construction, each \(\mathbf{c}_i\) has weight \(3\), satisfies \(\mathbf{H}\mathbf{c}_i^T=\mathbf{0}\), and the collection \(\{\mathbf{c}_1,\ldots,\mathbf{c}_k\}\) is linearly independent. Consequently, these \(k\) codewords form a basis of the code and may be taken as the rows of a generator matrix $\mathbf{G}_{k\times n}$.


Afterwards, we first apply Gray encoding to the message \(\mathbf{x}\) to obtain \(\hat{\mathbf{x}}\). The released codeword is then given by \(\mathbf{c}(\mathbf{x})=\hat{\mathbf{x}}\mathbf{G}.\)

Since any two adjacent Gray-encoded messages differ in exactly one bit position, their corresponding codewords differ by the addition (equivalently, subtraction over \(\mathbb{F}_2\)) of a single row of \(\mathbf{G}\). Because every row of \(\mathbf{G}\) has weight \(3\), the Hamming distance between the corresponding codewords is exactly \(3\). Therefore, adjacent messages are mapped to codewords that are separated by the minimum possible distance.

The overall procedure is summarized in Algorithm~\ref{alg2}. The algorithm is executed offline only once, and the resulting mapping is subsequently used for real-time operation.

\begin{algorithm}
\caption{Constructing the arrangement for Hamming$(n,k)$}\label{alg2}

\begin{algorithmic}[1]
\State Write $\mathbf{H}=[\mathbf{A}|\mathbf{I}]$, and arrange the columns of A in decreasing order of their decimal values.
\State Define $\pi(i)\triangleq \textnormal{bin2dec(column $i$ of $\mathbf{H}$)},\ i\in[1:n]$, and $R(i)\triangleq \max_{2^{j-1}\leq \pi(i)}j,\ i\in[1:k]$.
\State Construct $\mathbf{G}_{k\times n}$ with row $i$ having ones in positions $i,n-R({i})+1$, and $\pi^{-1}\left(\pi(i)-2^{R({i})-1}\right)$ and zeros elsewhere.
\State Output $\hat{\mathbf{x}}\mathbf{G}$ for the Gray encoded message $\hat{\mathbf{x}}$.

\end{algorithmic}
\end{algorithm} 
\end{proof}
It can be verified that in an arbitrary arrangement of codewords, the privacy loss can be as large as
\begin{equation}\label{worstcase}
   \epsilon_{\textnormal{max}}\triangleq (n-1)\ln\frac{1-p}{p}-\ln\frac{n(1-p)+p}{np+(1-p)}, 
\end{equation}
which occurs when the distance of two neighboring codewords is $n$. Comparing with $\epsilon^*$ in \eqref{opteps}, we have
\begin{equation*}
    \frac{\epsilon_{\textnormal{max}}}{\epsilon^*}=\frac{n-2}{3}+C_p+O(\frac{1}{n}),
\end{equation*}
where $C_p\triangleq\frac{1-2p}{3(1-p)\ln(\frac{1-p}{p})}$ is a constant.
\section{Conclusion}
This letter investigated the differential privacy properties of transmitting counting-query outputs over a BSC using Hamming codes. Rather than introducing additional privatization mechanisms, we exploited the inherent randomness already present in the communication channel and studied how the assignment of query outputs to codewords affects privacy. We derived the minimum achievable privacy loss for Hamming-coded transmission and proved that it is attained when adjacent query outputs are mapped to codewords at the minimum Hamming distance. An explicit construction based on a weight-three generator representation and Gray encoding was presented to realize this optimal arrangement. Importantly, the proposed method preserves the decoding error probability and incurs no additional online computational overhead. These results demonstrate that meaningful privacy improvements can be obtained through codeword organization alone, highlighting a new connection between channel coding and differential privacy. \footnote{This work is partially part of the Unified Networking Experience (UNEXT) framework presented in \cite{UNEXT}.}

\appendices
\section{Proof of lemma \ref{lem1}}\label{AppA}
Under ML decoding, Hamming codes correct any single-bit error. Therefore, a codeword $\mathbf{c}'$ is decoded at the receiver if and only if the received vector $\mathbf{r}$ satisfies $d_H(\mathbf{r},\mathbf{c}') \leq 1$. This condition is equivalent to either $d_H(\mathbf{r},\mathbf{c}') = 0$ or $d_H(\mathbf{r},\mathbf{c}') = 1$.  Since $\mathbf{c}$ and $\mathbf{c}'$ differ in $d$ positions, we have $d_H(\mathbf{r},\mathbf{c}') = 0$ if and only if all $d$ differing bits are flipped and the remaining $n-d$ matching bits remain unchanged during the transmission of $\mathbf{c}$. This occurs with probability $p^d(1-p)^{n-d}$.

For the case $d_H(\mathbf{r},\mathbf{c}') = 1$, there are two possibilities: (i) exactly $(d-1)$ of the $d$ differing bits are flipped and all $(n-d)$ matching bits remain unchanged, which occurs with probability $d\,p^{d-1}(1-p)^{n-d+1}$; or (ii) all $d$ differing bits are flipped and exactly one of the $(n-d)$ matching bits is also flipped, while the remaining $(n-d-1)$ matching bits remain unchanged, which occurs with probability $(n-d)p^{d+1}(1-p)^{n-d-1}$.

Hence, given that $\mathbf{c}$ was transmitted, the probability of the event $d_H(\mathbf{r},\mathbf{c}') \leq 1$ is
\begin{align*}
    \Pr\{\mathbf{c}'|\mathbf{c}\}
    &= p^d(1-p)^{n-d}
    + d\,p^{d-1}(1-p)^{n-d+1}\nonumber\\
    &\ \ \ + (n-d)p^{d+1}(1-p)^{n-d-1} \\
    &= p^d(1-p)^{n-d}\left(1+\frac{1-p}{p}d+\frac{p}{1-p}(n-d)\right).
\end{align*}
\section{Proof of lemma \ref{lem2}}\label{AppB}
For the first claim, we extend the domain of $f(\cdot)$ to the real numbers and show that $\frac{\partial}{\partial d}\frac{f(d+m)}{f(d)}$ and $m$ have opposite signs.

For simplicity, define $a \triangleq \frac{1-2p}{p(1-p)}$ and $b \triangleq 1+\frac{np}{1-p}$ as two non-negative constants. Then $f(d) = p^d(1-p)^{n-d}(ad+b)$, and we obtain
\begin{align}
   m\frac{\partial}{\partial d}\frac{f(d+m)}{f(d)}
   &= m\left(\frac{p}{1-p}\right)^m \frac{\partial}{\partial d}\frac{a(d+m)+b}{ad+b} \nonumber\\
   &= -\left(\frac{p}{1-p}\right)^m (\frac{ma}{ad+b})^2 \nonumber\\
   &< 0.
\end{align}

The second claim is proved by showing that $f(d+1) < f(d)$, or equivalently, that $\frac{f(d+1)}{f(d)} < 1$. From the first claim, $\frac{f(d+1)}{f(d)}$ is strictly decreasing in $d$. Therefore, it suffices to show that $\frac{f(1)}{f(0)} < 1$. We have

\begin{align}
    \frac{f(1)}{f(0)}
    &= \frac{p}{1-p}\left(1+\frac{a}{b}\right) \nonumber\\
    &< \frac{p}{1-p}\left(1+\frac{a}{1+\frac{p}{1-p}}\right) \label{eq2}\\
    &= 1,
\end{align}
where \eqref{eq2} follows from the inequality $b \triangleq 1+\frac{np}{1-p}>1+\frac{p}{1-p}$ since $n>1$ (indeed, $n\geq 3$ for Hamming codes). 
\section{Proof of lemma \ref{lem3}}\label{AppC}
In Hamming$(n,k)$ code, a vector $\mathbf{c}$ is a codeword if and only if $\mathbf{H}\mathbf{c}^T=\mathbf{0}$. Let $\mathcal{W}_3$ denote the set of all codewords of weight $3$. Since the sum of any two distinct columns of $\mathbf{H}$ is also a column of $\mathbf{H}$, each pair of columns determines a unique weight-$3$ codeword. Consequently, the total number of weight-$3$ codewords is $|\mathcal{W}_3|=\frac{1}{3}\binom{n}{2}$, where the factor $\frac{1}{3}$ accounts for the fact that each weight-$3$ codeword is counted three times, corresponding to the three possible pairs among its three nonzero positions.

In what follows, we show that every codeword can be expressed as a linear combination of codewords in $\mathcal{W}_3$. It follows that $\mathcal{W}_3$ spans the null space of $\mathbf{H}$. Since the latter, namely the code itself, has dimension $k$, we conclude that there exists a subset of $k$ codewords in $\mathcal{W}_3$ that forms a basis of the code.\footnote{Note that $k<\frac{1}{3}\binom{n}{2}$, which follows by substituting $n=2^q-1$ and $k=2^q-q-1$ for $q\ge 2$.}

Let $\mathbf{c}$ be an arbitrary codeword of weight $m>3$, i.e., $\mathbf{c}$ contains $m$ ones and $n-m$ zeros. Clearly, $\mathbf{H}\mathbf{c}^T=\mathbf{0}$. Let $i_1,i_2,\ldots,i_m$ denote the indices of ones in $\mathbf{c}$.
Choose two distinct indices, say $i_1$ and $i_2$. The sum of the corresponding columns of $\mathbf{H}$ equals another column of $\mathbf{H}$, whose index we denote by $i^*$. Construct the weight-$3$ codeword $\mathbf{c}'$ with ones in positions $i_1$, $i_2$, and $i^*$. Define $\mathbf{c}_{\textnormal{new}} \triangleq \mathbf{c} \oplus \mathbf{c}'$,
which is also a codeword since the code is linear.

The weight of $\mathbf{c}_{\textnormal{new}}$ is either $m-3$ or $m-1$, depending on whether $i^* \in \{i_3,\ldots,i_m\}$ or not. Hence, $\mathbf{c}$ can be decomposed as $\mathbf{c} = \mathbf{c}_{\textnormal{new}} \oplus \mathbf{c}'$.
We now apply the same procedure iteratively to $\mathbf{c}_{\textnormal{new}}$. At each step, the weight strictly decreases, so the process terminates in finitely many steps. Since Hamming codes contain no codewords of weight $1$ or $2$, the procedure eventually expresses $\mathbf{c}$ as a sum of weight-$3$ codewords in $\mathcal{W}_3$. This completes the proof.

\bibliography{REF}
\bibliographystyle{IEEEtran}
\end{document}